\documentclass[reprint,aps,pra,superscriptaddress,showkeys]{revtex4-2}
\usepackage[english]{babel}
\usepackage{siunitx}
\usepackage{physics}
\usepackage{bbm}
\usepackage{amsthm}
\usepackage{amssymb}
\usepackage{graphicx}
\usepackage{wasysym}
\usepackage{xcolor}
\usepackage[normalem]{ulem}

\newtheorem{theorem}{Theorem}[section]
\newtheorem{proposition}{Proposition}

\newtheorem{lemma}[theorem]{Lemma}

\theoremstyle{remark}

\newcommand{\id}{\mathbbm{1}}

\let\emptyset\varnothing

\begin{document}

\title{Analog neutral-atom for in-memory processing in
quantum reservoir computing 
}

\author{Luca Nigro}
\affiliation{Dipartimento di Fisica, Università degli Studi di Milano, Via Celoria 16, 20133 Milano, Italy}
\author{Gian Luca Giorgi}
\affiliation{Institute for Cross-Disciplinary Physics and Complex Systems (IFISC) UIB-CSIC, Campus Universitat Illes Balears, 07122 Palma de Mallorca, Spain}
\author{Enrico Prati}
\affiliation{Dipartimento di Fisica, Università degli Studi di Milano, Via Celoria 16, 20133 Milano, Italy}
\affiliation{Istituto di Fotonica e Nanotecnologie, Consiglio Nazionale delle Ricerche, Piazza Leonardo da Vinci 32, 20133 Milano, Italy}
\author{Roberta Zambrini}
\affiliation{Institute for Cross-Disciplinary Physics and Complex Systems (IFISC) UIB-CSIC, Campus Universitat Illes Balears, 07122 Palma de Mallorca, Spain}

\date{\today}




 


\begin{abstract}
Quantum Reservoir Computing (QRC) exploits the rich dynamics of quantum many-body systems to process time-dependent information with high-dimensional state spaces. 
While neutral atom arrays offer a scalable platform for this paradigm, realizing intrinsic temporal memory without relying on external classical buffering requires precise control over the relaxation dynamics of the system. We address the necessity of specific non-unitary dynamics for enforcing fading memory while overcoming the lack of separability.
This study simulates and empirically identifies the specific dynamical regimes required to maximize the computational capacity of the reservoir, addressing the amount of dissipation, the effectiveness of lying at the edge of quantum chaos, and that of time-multiplexing compared to the scaling of the physical size of the system.
Here we show that controlled dissipation is strictly necessary to induce echo state property, fading memory, and separability, in neutral atom arrays.
Computational performance increases at the edge of quantum chaos. The counter-intuitive residual memory provided by the unital phase damping channel is justified by introducing a model based on pure outputs. The consequent implementation enables the solution of complex non-linear tasks, such as the Mackey-Glass time series, with as few as $N=5$ atoms.
We establish a rigorous framework for engineering dissipative quantum reservoirs required for intrinsic temporal processing on near-term quantum devices.
\end{abstract} 


\keywords{Quantum Reservoir Computing, Neutral Atoms Quantum Computer, Quantum Chaos, Open Quantum Systems, Quantum Memory}
	
\maketitle

\section{Introduction}
Implementing machine learning on Noisy Intermediate-Scale Quantum (NISQ) hardware remains challenging~\cite{PRXQuantum.1.020101,rev.q.alg}. To circumvent the barren plateaus~\cite{babbush2018,Fontana2024,larocca2025barren} that hinder variational methods, non-variational algorithms have emerged as robust alternatives. Techniques including quantum kernel methods~\cite{schuld2021machine}, quantum principal component analysis~\cite{lloyd2014quantum}, quantum feature maps~\cite{simen2025quenchedquantumfeaturemaps}, quantum reservoir computing (QRC)~\cite{Fujii2017}, and quantum extreme learning machines (QELM)~\cite{innocenti2023potential} project complex data into high-dimensional Hilbert spaces via fixed quantum evolution. Relying solely on non-parametrized dynamics, these approaches inherently bypass standard trainability bottlenecks.

Within this non-variational framework, Reservoir Computing (RC) stands out for its efficacy in time-series analysis~\cite{jaeger2004, lukosevicius2009, tanaka2019, nakajima2021}. The distinctive feature of RC is the use of an input-driven dynamical system, called a reservoir, to non-linearly map the input data into a high-dimensional feature space. Such architectures can approximate any target continuous functional with arbitrary precision, provided they exhibit fading memory, a property formally extended to the quantum domain~\cite{Chen2020, martinez-pena2021,Nokkala2021,Sannia2024,monzani2025}. 
Following Fujii and Nakajima~\cite{Fujii2017} proposal, 
this class of algorithms has been expanded to exploit various quantum substrates~\cite{mujal2021} leading to recent experimental reservoir computing and extreme learning machines implementations~\cite{Chen2020,Suprano2024,McMahon2023,kornjaca2024,Cimini2026,Carles2026,Hou2026,Paparelle2026}. 
QRC harnesses natural quantum dynamics as a reservoir, where the non-linearity of quantum observables and many-body correlations provide a feature set whose complexity scales exponentially with the number of units. 
This high-dimensional quantum mapping offloads the computational burden from the training process, allowing the final output to be reconstructed through a simple classical linear regression layer optimized solely on the reservoir's readout.

This representational power and training efficiency have sparked significant interest, leading to theoretical advances that explore spatial multiplexing~\cite{nakajima2019}, distributed architectures~\cite{GarciaBeni2025}, dynamical phase transitions~\cite{martinez-pena2021}, measurement protocols~\cite{mujal2023,morguisancho2026}, dissipation~\cite{Sannia2024,Domingo2023, gotting2025}, non-markovianity \cite{Sannia2026}, quantum coherence~\cite{Palacios2024}, feedback mechanisms~\cite{Kobayashi2024}, topological effects \cite{Sannia2025}, and quantum probing~\cite{Kobayashi2025}. Beyond theory, this momentum has driven concrete hardware proposals~\cite{Dudas2023, llondra2025, Ricci2026, monzani2025nonunital} and practical applications spanning medical data analysis~\cite{antoncich2026}, chaotic forecasting~\cite{li2025b}, quantum circuit compression~\cite{ghosh2021}, financial prediction~\cite{vitali2025, otieno2026}, and image denoising~\cite{das2025b}.

An effective reservoir requires three foundational conditions: the echo state property (ESP), fading memory, and separability~\cite{lukosevicius2009}. ESP and fading memory ensure that the system asymptotically forgets its initial state while retaining a finite input history. Simultaneously, separability maps distinct temporal inputs to distinguishable Hilbert space regions. Lacking these intrinsic properties, architectures relying on external classical buffering reduce to QELMs~\cite{innocenti2023potential}, serving merely as static, non-linear feature extractors. True temporal quantum processing requires moving beyond QELMs, engineering systems where memory emerges directly from the quantum dynamics.

Neutral atom arrays excited to Rydberg states offer a highly suitable platform for this challenge due to their tunable geometry and strong van der Waals interactions~\cite{Saffman2010, Bernien2017, Browaeys2020, Manetsch2025}. The ability to assemble and coherently control large numbers of identical atoms has driven the rapid development of scalable commercial quantum processors~\cite{Henriet2020, Graham2022, wurtz2023}. Recent breakthroughs have begun to exploit this potential for machine learning architectures. For instance, Bravo et al.~\cite{bravo2022} introduced a quantum Recurrent Neural Network (qRNN) model using Rydberg arrays. While their framework successfully demonstrates brain-inspired cognitive functions, it primarily operates as a generic quantum neural network rather than a dissipative temporal reservoir governed by fading memory.

Transitioning specifically to temporal processing, initial efforts achieved time-series prediction on Rydberg platforms, as notably demonstrated by Kornjača et al.~\cite{kornjaca2024, beaulieu2025}. However, by relying on classically buffered, time-windowed inputs, their approach does not deploy in-memory processing and effectively operates within the QELM paradigm, utilizing the quantum array primarily as a static, non-linear feature map rather than an active dynamical memory. Following this, Settino et al.~\cite{settino2025} explored hybrid solutions that explicitly deploy classical memory to bypass the finite temporal retention of the quantum system. Most recently, Liu et al.~\cite{liu2026} advanced practical QRC implementations in Rydberg atom arrays, showing immense promise across diverse application domains~\cite{batista2025, vitali2025, antoncich2026}. Despite these rapid advancements, achieving intrinsic memory management without external classical buffering remains a critical challenge. To operate QRC, the necessity of dissipation has been investigated across various quantum platforms~\cite{Sannia2024, das2025, gotting2025}.

In this work, we bridge this critical gap by establishing the fundamental physical conditions required to operate an analog neutral atom array as a true dynamical reservoir. Rather than relying on classical buffers or local digital-analog control, we prioritize scalable global driving and continuous dissipative dynamics. We demonstrate under which open-system maps the ESP, fading memory, and input separability are intrinsically guaranteed, explicitly confirming the necessity of non-unitary, non-unital, ergodic, and mixing evolution.
Having established these theoretical requirements, we systematically optimize the physical and operational parameters of the reservoir using the linear Short-Term Memory (STM) task. This benchmark allows us to map the performance of the atomic quantum reservoir across its dynamical phases, revealing that memory capacity increases when the Hamiltonian is tuned to the edge of quantum chaos~\cite{yaakov2002, Pena2023, llondra2025, kobayashi2026}. We perform a comprehensive comparison of relevant noise channels (amplitude damping (AD), phase damping, and depolarizing) as a function of the decay rate $\gamma$. 
Notably, our analysis reveals that, while the amplitude damping is optimal, the typically detrimental phase damping channel preserves a residual temporal memory. The STM task shows that  detuning encoding outperforms Rabi amplitude encoding. Furthermore, time-multiplexing saturates quickly due to multicollinearity, making larger physical system size the more robust route to larger expressivity.

Following this structural optimization, we evaluate the reservoir's ability to process higher-order temporal correlations by testing its non-linear memory capacity at increasing polynomial degrees $p$ and in complex chaotic forecasting tasks. Finally, we tackle the highly irregular Santa Fe time series to compare our fully autonomous approach against previous methodologies~\cite{kornjaca2024}. By evaluating the impact of classical time-windowing across both paradigms, we definitively quantify the computational advantage of intrinsic quantum fading memory over static feature extraction.

The manuscript is organized as follows. In Section~\ref{sec:neutralatom_qrc}, we introduce the neutral atom QRC model. Section~\ref{sec:dissipation} characterizes the dissipative channels using the pure output formalism and proves the restoration of ergodicity for phase damping. Section~\ref{sec:results} presents the numerical performance on non-linear benchmarks, highlighting the superiority of the AD channel and the edge-of-chaos regime. Finally, Section~\ref{sec:conclusion} summarizes the design principles for quantum information processing.


\section{Neutral atoms array for QRC}\label{sec:neutralatom_qrc}

To implement a quantum reservoir, we utilize the highly tunable dynamics of neutral atom arrays trapped in optical tweezers. This platform offers natural scalability, long coherence times, and strong, controllable interactions, making it an ideal physical substrate for processing complex temporal data.
The internal dynamics of the quantum reservoir are governed by the interacting Rydberg Hamiltonian. We consider an array of $N$ atoms, each modeled as a two-level system consisting of a ground state $|g\rangle$ and a highly excited Rydberg state $|r\rangle$. The coherent evolution of the system is described by:
\begin{equation}\label{eq:hamiltonian}
  H = \frac{\Omega(t)}{2} \sum_{i=1}^N \sigma_x^{(i)} - \Delta(t) \sum_{i=1}^N n^{(i)} + \sum_{i<j} V_{ij} n^{(i)} n^{(j)}
\end{equation}
where $\Omega(t)$ is the transverse driving Rabi frequency, $\sigma_x^{(i)}$ is the Pauli-X operator coupling the ground and Rydberg states of the $i$-th atom, and $\Delta(t)$ is the global laser detuning. The operator $n^{(i)} = |r\rangle\langle r|^{(i)}$ represents the Rydberg state population at site $i$. The term $V_{ij} = C_6 / d_{ij}^6$ represents the strongly repulsive van der Waals interaction between atoms $i$ and $j$ separated by a distance $d_{ij}$. 

To process sequential data, the classical input stream $u_k$ must be continuously injected into the quantum state. The comprehensive study by Kornjaca \textit{et al.}~\cite{kornjaca2024} classifies data injection in programmable neutral atom arrays into three distinct paradigms: (i) position encoding, which modulates the effective Rydberg interaction strength by engineering the physical geometry of the atoms, (ii) local pulse encoding, which utilizes site-dependent detunings to address individual atoms, and (iii) global encoding, where a single time-varying laser profile acts simultaneously on the entire array.

Ref.~\cite{kornjaca2024} demonstrates that local pulse encoding yields the highest predictive performance, achieving site-resolved control, for example via spatial light modulators (SLMs)~\cite{Graham2022}, but introduces significant engineering overhead. In contrast, we deliberately adopt the mechanically simpler global encoding. 
Although the extended temporal schedules required for continuous global driving demand long coherence times, the approach is exceptionally hardware-efficient with an $\mathcal{O}(1)$ scaling: a single laser profile drives the entire array simultaneously, so the encoding overhead does not grow with system size. Because our primary focus lies in the simulation regime, aiming to identify fundamental physical requirements and optimal operational phases to guide future hardware, we prioritize the theoretical simplicity and robust scalability offered by the global approach.  

In our continuous-time paradigm, the time-dependent input signal directly modulates the global driving fields. At each discrete time step $k$, the input value $u_k \in [0,1]$ is held constant over a duration $\Delta t$, producing a piecewise-constant driving protocol. Specifically, the input sequence is mapped onto the global detuning such that 
\begin{equation} 
\Delta_k = \Delta(u_k) \in \left[\Delta_{\text{center}} -\delta,\, \Delta_{\text{center}}+ \delta\right], 
\end{equation} 
where $\Delta_{\text{center}}$ is the operating point around which the input modulates the detuning, and $\delta$ dictates the encoding range. We will refer to such a protocol as the global detuning encoding. Alternatively, we explore mapping the input sequence onto the transverse Rabi frequency such that 
\begin{equation} 
\Omega_k = \Omega(u_k) \in \left[\Omega_{\text{center}} -\delta_\Omega,\, \Omega_{\text{center}}+ \delta_\Omega\right], 
\end{equation}
where $\Omega_{\text{center}}$ and $\delta_\Omega$ are the corresponding base frequency and encoding range. We will refer to such a protocol as the global Rabi encoding. The optimal values of all mentioned parameters are identified in
Appendix~\ref{app:input_encoding}.

To contextualize our architecture, we contrast our approach with standard analog implementations~\cite{kornjaca2024}, which process sequential data by buffering it into a classical time window. This paradigm relies on an external memory over a finite temporal input to drive a predominantly unitary evolution, ultimately extracting multiple-time outcomes rather than continuously integrated features. As recent theory highlights~\cite{liu2026}, such window-based protocols function more akin to QELMs because their temporal retention is not intrinsic to the quantum system. In contrast, our methodology completely eliminates classical buffering to establish a QRC architecture. By explicitly optimizing open-system dissipative channels and tuning the Hamiltonian to the edge of quantum chaos, we ensure the array intrinsically supports the echo state property. This endows the system with an instantaneous, time-local memory that autonomously integrates past inputs, forcing the array to operate as a quantum reservoir.

To formally model the open-system dynamics, let $\rho$ denote the density matrix of the full $N$-atom reservoir, acting on the Hilbert space $\mathcal{H} = (\mathbb{C}^2)^{\otimes N}$. Its time evolution under the input-driven dissipative dynamics is governed by the Lindblad master equation $\dot{\rho} = \mathcal{L}(\rho)$, where
\begin{multline}
 \mathcal{L}(\rho) = -i[H(u_k), \rho] \\
+ \sum_{i=1}^{N} \sum_{\mu} \gamma_\mu 
\left( L_\mu^{(i)} \rho L_\mu^{(i)\dagger} 
- \tfrac{1}{2} \left\{ L_\mu^{(i)\dagger} L_\mu^{(i)}, \rho \right\} \right).
\label{eq:lindbladian}
\end{multline}
Here $H(u_k)$ is the input-dependent Rydberg Hamiltonian of Eq.~\eqref{eq:hamiltonian}, $\gamma_\mu > 0$ is the decay rate associated with the $\mu$-th dissipative channel, $L_{\mu}^{(i)}$ are the corresponding Lindblad jump operators acting locally on atom $i$, and $\{\cdot,\cdot\}$ denotes the anticommutator. 
At each time step $k$, the input $u_k$ enters through the global driving fields as defined in Section~\ref{sec:neutralatom_qrc}, and the reservoir state $\rho_k$ is updated by integrating this equation over the duration $\Delta t$. The readout observables are then extracted from $\rho_k$ as described below.

We evaluate four primary noise channels:
\begin{description}
    \item[Amplitude damping (AD)] Characterized by $L_{\text{AD}}^{(i)} = \sigma_-^{(i)} = \ket{g}\bra{r}^{(i)}$, modeling spontaneous emission from the Rydberg state to the ground state.
    \item[Generalized amplitude damping (GAD)] The finite-temperature extension of AD, incorporating both spontaneous emission $L_{\text{GAD},-}^{(i)} = \sqrt{f}\,\sigma_-^{(i)}$ and thermal absorption $L_{\text{GAD},+}^{(i)} = \sqrt{1-f}\,\sigma_+^{(i)}$. Here, $f \in [0,1]$ is the emission fraction, with $f=1$ recovering the pure AD channel.
    \item[Phase damping (PD)] Defined by $L_{\text{PD}}^{(i)} = \sigma_z^{(i)}$, representing pure dephasing.
    \item[Depolarizing noise (Dep)] Applies isotropic dissipation via $L_{\text{Dep}}^{(i)} \in \{\sigma_x^{(i)}, \sigma_y^{(i)}, \sigma_z^{(i)}\}$.
\end{description}
Rather than relying on a rigorous microscopic derivation for strongly interacting regimes, we adopt a phenomenological approach, assuming independent interactions with a Markovian bath for each atom. Each channel can be interpreted as the continuous weak monitoring of an atom by a weakly coupled environment, whose stochastic measurement outcomes are averaged over. Such averaging washes out the conditional back-action and recovers the deterministic Lindblad generator above.

At the end of each time step $k$, the reservoir state $\rho_k$ is measured to extract the classical feature vector $\mathbf{x}_k$. For global detuning encoding, the features consist of the single-site expectation values $\langle \sigma_z^{(i)} \rangle_k = \mathrm{Tr}[\sigma_z^{(i)} \rho_k]$, representing the Rydberg-state population of atom $i$, together with the connected two-body correlators $\langle \sigma_z^{(i)} \sigma_z^{(j)} \rangle_k = \mathrm{Tr}[\sigma_z^{(i)} \sigma_z^{(j)} \rho_k]$ for all pairs $i < j$. For global Rabi encoding, the same structure is adopted, yielding $\langle \sigma_x^{(i)} \rangle_k$ and $\langle \sigma_x^{(i)} \sigma_x^{(j)} \rangle_k$. The full feature vector $\mathbf{x}_k$ is then passed to the classical readout layer, detailed in Appendix~\ref{app:simulation}.


\section{Geometry of Dissipative Dynamics}\label{sec:dissipation}
Efficient quantum reservoir computing fundamentally relies on channels that support fading memory and the echo state property. Previous studies~\cite{Sannia2024,Pena2023} suggest that these temporal processing capabilities are best achieved through open-system evolutions that are non-unital, ergodic, and mixing. 

To ensure that our reservoir exploits genuine quantum memory, where temporal information is intrinsically preserved during a coherent evolution rather than buffered through a classical intermediate step, we restrict our focus exclusively to non-entanglement-breaking (non-EB) channels. A map $\Phi$ is entanglement-breaking if the joint state $(\Phi \otimes \mathbb{I}) \rho_{AB}$ is separable for any initial bipartite state $\rho_{AB}$~\cite{horodecki2003}. Any such channel reduces to a measure-and-prepare (M\&P) scheme~\cite{Holevo1998}, effectively acting as a classical bottleneck. By focusing on non-EB channels, we guarantee that the reservoir maintains the potential for many-body entanglement, operating in a regime fundamentally distinct from M\&P architectures~\cite{Vieira2025}.

To formally characterize these non-EB dissipative mechanisms, we adopt the pure output (PO) formalism~\cite{Braun2014}. The set of pure outputs, $\text{PO}(\Phi)$, is defined as the set of all pure states contained within the image of the quantum channel $\Phi$. Because our Lindblad master equation models dissipation as an independent, localized interaction between each atom and the Markovian bath, we apply this geometric classification at the single-qubit level. Understanding the intrinsic geometric signature of the local dissipative map on the individual Bloch sphere is a preliminary step before analyzing how the global many-body Hamiltonian collective dynamics emerges.

Applying this framework to our noise models reveals distinct geometric deformations. The depolarizing (Dep) channel represents an isotropic contraction of the Bloch sphere that maps all pure states to mixed states ($\text{PO}_{\text{Dep}} = \emptyset$). This contraction is symmetric around the origin, confirming its unital nature (it preserves the maximally mixed state $\id/2$). The phase damping (PD) channel is unital, but it suppresses coherences while preserving populations, leaving exactly two orthogonal pure states invariant ($\text{PO}_{\text{PD}} = \{|0\rangle, |1\rangle\}$). Conversely, the amplitude damping (AD) channel is explicitly non-unital. By driving the system toward a specific ground state, it possesses a single pure output ($\text{PO}_{\text{AD}} = \{|0\rangle\}$), a process that permanently displaces the center of the Bloch sphere.

We now connect this PO geometric classification to the asymptotic dynamical properties required for QRC: ergodicity and mixing. Formally, a map is ergodic if it possesses a unique fixed point $\rho_\text{ss}$, and mixing if any initial state asymptotically converges to this fixed point as $t\to\infty$~\cite{Burgarth2007}. We unify these concepts for local non-EB qubit channels generated by continuous Markovian dynamics (Lindblad semigroups) through the following proposition (rigorous proofs are provided in Appendix~\ref{app:proofs_po}):

\begin{proposition}\label{prop:po}
Given the Lindblad generator $\mathcal{L}$, let $\Phi_t = e^{\mathcal{L}t}$ be a continuous-time, non-EB, single-qubit channel. The following properties hold:
\begin{enumerate}
\item[(a)] If $\Phi_t$ admits exactly two POs, it is unital, non-ergodic, and non-mixing.
\item[(b)] If $\Phi_t$ admits a unique PO, it is ergodic and mixing toward that pure state.
\item[(c)] If $\Phi_t$ admits no POs, it is ergodic and mixing.
\end{enumerate}
\end{proposition}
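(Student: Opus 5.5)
Work in the Bloch representation. Any single-qubit channel acts as an affine map $\vec r \mapsto M\vec r + \vec c$ on the Bloch ball. For a Lindblad semigroup $\Phi_t = e^{\mathcal{L}t}$ this gives $M_t = e^{At}$ and $\vec c_t = \int_0^t e^{As}\vec b\, ds$, where $A$ is a real $3\times 3$ matrix and $\vec b$ a vector, both fixed by $\mathcal{L}$. A pure output is a point of the image lying on the unit sphere. Because $\Phi_t$ is non-EB, $M_t$ is not the zero map and not rank-deficient in the trivial measure-and-prepare sense. Our first reduction is that a pure output $|\psi\rangle$ forces an extremality condition: the ellipsoid $\{M_t\vec r+\vec c_t : |\vec r|\le 1\}$ touches the unit sphere at $\vec n_\psi$. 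Strict convexity of the ball then implies the preimage is a single pure state $\vec m$, with $|M_t^{T}\vec n_\psi| \ge$ the relevant support value. Standard results on tangency of the image ellipsoid (as in~\cite{Braun2014}) leave only a few possibilities. Either the image ellipsoid has a semi-axis of length one along $\vec n_\psi$, giving an isometric direction, or the ellipsoid touches the sphere at one point with a nonzero translation.

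Case (a), two POs. We first show the two pure outputs must be antipodal, i.e.\ orthogonal states. Two distinct tangency points of a contained ellipsoid with the unit sphere can occur only if the ellipsoid contains a full diameter through both points. Its center is then the midpoint, and by a convexity argument on the chord that center is $\vec 0$. Hence $\vec c_t=0$ and the channel is unital. The direction $\vec n$ of that diameter is preserved, with $M_t\vec n = \pm\vec n$, and by continuity in $t$ from $M_0=\id$ we get $M_t\vec n=\vec n$. Therefore every Bloch vector $\lambda\vec n$ with $\lambda\in[-1,1]$ is a fixed point. In particular both $|0\rangle$ and $|1\rangle$ (after a rotation) and $\id/2$ are fixed, so the fixed point is not unique and the map is non-ergodic. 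Non-mixing follows immediately, since distinct fixed points never converge to a common limit.

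Cases (b) and (c). For (b), let $\vec n$ be the unique PO. Because the ellipsoid touches the sphere at one point only, the channel cannot contain an isometric direction with zero translation. We show the tangency forces $\vec c_t\neq 0$ in the direction of $\vec n$, with all eigenvalues of $M_t$ of modulus $<1$ for $t>0$. For a semigroup, $M_t=e^{At}$, so this means $\mathrm{spec}(A)\subset\{\Re z<0\}$. Then $\Phi_t$ is a strict contraction in Bloch space, and we can apply Banach's theorem, or simply solve $\dot{\vec r}=A\vec r+\vec b$. The result is the unique fixed point $\vec r_{ss}=-A^{-1}\vec b$ with exponential convergence $\|\vec r(t)-\vec r_{ss}\|\le \|e^{At}\|\,\|\vec r(0)-\vec r_{ss}\|\to 0$, which is ergodicity plus mixing. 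Since $\vec n$ is itself in the image for every $t$ and is invariant, we get $\vec r_{ss}=\vec n$. For (c), no PO means no point of the sphere is in the image. An eigenvalue of $M_t$ of modulus one would, by the Lindblad structure, correspond to a rotation or an invariant axis whose image contains pure states. We rule this out and again get a spectral gap, so the same ODE argument yields a unique, attracting fixed point.

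The main obstacle is the spectral step in (b) and (c): excluding eigenvalues of $A$ on the imaginary axis (purely unitary rotation components or a neutral axis) using only the PO count. We would first handle it by writing the general qubit Lindbladian in Gorini--Kossakowski--Sudarshan form. This gives $A = -\Gamma + \Omega_\times$, with $\Gamma$ positive semidefinite built from the Kossakowski matrix and $\Omega_\times$ the Hamiltonian rotation. We would then show that a zero eigenvalue of the symmetric part along some invariant direction produces either two antipodal POs, which is case (a), or a full great circle of POs, which is impossible under the non-EB/dissipative assumptions. The intermediate case where a rotation preserves a direction not damped by $\Gamma$ is the delicate part and may need an extra genericity assumption.
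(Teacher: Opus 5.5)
There is a genuine gap, and it lies exactly where the continuous-time (GKSL) structure has to enter.

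\textbf{Case (a).} Your key step, that two contact points of an ellipsoid inside the Bloch ball force it to contain a full diameter, is false, even for non-EB channels. The non-degenerate extremal qubit channels mentioned in the paper's closing remark map the ball onto a displaced ellipsoid tangent to the sphere at two non-antipodal points. For example, semi-axes $(\tfrac{1}{2},\tfrac{1}{\sqrt{2}},\tfrac{1}{2\sqrt{2}})$ with centre $(0,0,\tfrac{\sqrt{6}}{4})$ give an ellipsoid touching at $(0,\pm\tfrac{1}{\sqrt{3}},\sqrt{2/3})$; this channel is non-EB because the semi-axes sum to more than $1$. Antipodality is therefore a property of semigroups, not of ellipsoids. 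The paper derives it algebraically. By Lemma~\ref{lemma:dark}, a pure $\ket{\psi}\bra{\psi}$ is stationary iff $L_k\ket{\psi}=\lambda_k\ket{\psi}$ for all $k$ and $H_{\mathrm{eff}}\ket{\psi}=\mu\ket{\psi}$ with $\mathrm{Im}\,\mu=-\tfrac12\sum_k|\lambda_k|^2$. Lemma~\ref{lemma:ortho} then sandwiches $H_{\mathrm{eff}}-H_{\mathrm{eff}}^\dagger=-i\sum_kL_k^\dagger L_k$ between two non-orthogonal stationary rays to get $\sum_k|\lambda_k^{(1)}-\lambda_k^{(2)}|^2=0$. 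Hence $L_k\propto\id$ and there is no dissipation.

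\textbf{Pure outputs as fixed points.} You also never show that a pure output is a fixed point. The relation $M_t\vec n=\pm\vec n$ does not follow from the diameter, since the preimages of the contact points need not be $\pm\vec n$. In (b), the invariance of $\vec n$ is only asserted. The paper makes this identification via the nested images $\Phi_{t+s}(\mathbf{S})\subseteq\Phi_t(\mathbf{S})$. One clean justification runs as follows:
\begin{itemize}
\item $M_t=e^{At}$ is invertible, whether or not the channel is EB. By extremality of pure states and injectivity of $\Phi_{t-s}$, a pure output can only be reached along a trajectory that stays pure on all of $[0,t]$.
\item The purity rate at a pure state is $-2\sum_k(\|L_k\psi\|^2-|\langle\psi|L_k|\psi\rangle|^2)$. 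So along such a trajectory, $\psi(s)$ is a common eigenvector of every $L_k$ at every $s$.
\item Once some $L_k\not\propto\id$, the set of common eigen-rays is finite, so the trajectory is constant and $\mathcal{L}(\rho_0)=0$.
\end{itemize}

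\textbf{Cases (b) and (c).} The decisive claims (``we show the tangency forces\dots'', ``we rule this out'') are announced rather than proved. You suspect an extra genericity assumption is needed; it is not, and the paper avoids spectral analysis altogether.
\begin{itemize}
\item For (b): once $P_\psi$ is stationary, Lemma~\ref{lemma:dark} makes every $L_k$ and $H_{\mathrm{eff}}$ upper triangular in the basis $\{\ket{\psi},\ket{\psi^\perp}\}$. Then $p_\perp=\langle\psi^\perp|\rho|\psi^\perp\rangle$ obeys the closed equation $\dot p_\perp=-\sum_k|\langle\psi|L_k|\psi^\perp\rangle|^2\,p_\perp$. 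Its rate vanishes only if $\ket{\psi^\perp}$ is a second pure steady state. Positivity then forces the coherence to zero. This is what the paper's ``population drains'' amounts to.
\item For (c): Brouwer gives an interior fixed point. Uniqueness holds because two fixed points would span an affine line of fixed states, which leaves the ball at two pure steady states.
\end{itemize}

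If you want to keep your spectral route for (c), these are the missing facts:
\begin{itemize}
\item Trace-distance contractivity makes $M_t$ a Euclidean contraction, so $\mathrm{Re}\,\mathrm{spec}(A)\le 0$.
\item A zero eigenvalue of $A$ produces such a line of fixed points through the interior fixed point.
\item A pair $\pm i\omega$ gives an invariant plane on which $M_t$ is a periodic contraction, hence an isometry. A circle about the fixed point in that plane is then mapped onto itself, and its contact point with the sphere is a pure output, contradicting the hypothesis.
\end{itemize}
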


Proposition~\ref{prop:po} is established at the single-qubit level and
serves as a heuristic guide for the many-body reservoir, whose validity
is confirmed numerically by the ESP and fading-memory tests in the next Section.

Based on Proposition~\ref{prop:po}, both the local AD (1 PO) and Dep (0 POs) channels inherently exhibit the ergodic and mixing dynamics desired for reservoir computing. Furthermore, strictly adhering to the criterion of non-unitality for optimal fading memory~\cite{Pena2023}, the AD channel emerges as the most naturally suitable candidate. 
Conversely, the PD channel explicitly violates ergodicity due to its two orthogonal POs ($|0\rangle, |1\rangle$). Intuitively, pure dephasing preserves the populations along the quantization axis, meaning any incoherent mixture on the $z$-axis acts as a steady state. While this theoretical limitation would seemingly disqualify PD as a reservoir resource, this conclusion assumes the absence of a driving field. The many-body Rydberg Hamiltonian introduces terms (such as the transverse drive $\Omega \sum \sigma_x^{(i)}$) that do not commute with the PD Lindblad operators ($[H, L_z] \neq 0$).

Geometrically, this coherent drive generates rotations that continuously displace the POs ($|0\rangle, |1\rangle$) away from the invariant $z$-axis. Once displaced, these states are subjected to the dissipative action of the PD channel and contracted into the interior of the Bloch sphere. Consequently, the combined open-system map effectively possesses zero pure outputs and, according to Proposition~\ref{prop:po}(c), this global symmetry breaking restores both ergodicity and mixing. We stress, however, that the resulting dynamics remains unital: as the spectral analysis of Appendix~\ref{app:restoration} shows, its unique fixed point is the maximally mixed state, which is input-independent, so that phase damping cannot sustain input separability in the asymptotic limit. The residual memory of the driven PD channel is instead a genuinely finite-time effect: the symmetry-breaking drive removes the steady-state degeneracy but leaves a small Liouvillian gap, decelerating the relaxation toward the maximally mixed state and preserving a usable trace of past inputs over intermediate timescales. A rigorous spectral analysis of the Liouvillian validating this mechanism is provided in Appendix~\ref{app:restoration}.

\section{Results}\label{sec:results}
In this Section, we present the performance of the neutral atom quantum
reservoir across a series of increasingly complex temporal processing
benchmarks. All physical and simulation parameters are detailed in
Appendix~\ref{app:simulation}.

First, we utilize the Short-Term Memory (STM) task~\cite{jaeger2001} to identify the optimal dynamical regime and characterize the critical role of different dissipative channels in establishing fading memory. We also leverage this task to evaluate the scalability of the physical array compared to time-multiplexing, a technique in which features are
extracted not only at time $k\Delta t$ but also at $V-1$ additional
sub-intervals within each step, yielding $V$ virtual nodes.
Based on these findings, we then evaluate the ability of the atomic reservoir to process complex non-linear dependencies using the NARMA-$p$ benchmark~\cite{Atiya2000narma}. Finally, we push the system to its computational limits by testing its forecasting capabilities on two distinct chaotic datasets: the Mackey-Glass time series~\cite{mackeyglass1977}, where we assess both one-step-ahead accuracy and long-term horizons, and the Santa Fe time series~\cite{weigend1994santafe}, which provides a definitive benchmark to contrast our autonomous quantum memory against classically buffered architectures. Across all subsequent benchmark tasks, we consistently partition the data set by setting the train/test threshold at $66\%$ of the total time series, ensuring a rigorous outside-of-sample evaluation of the remaining data.

\subsection{Dynamical features}

Before deploying the neutral atom array on these complex computational benchmarks, we must rigorously verify that the system satisfies the three fundamental requirements of reservoir computing: the echo state property (ESP), fading memory, and separability. Figure~\ref{fig:echo_state_property} evaluates these requisite properties across three distinct dissipative channels. 

\begin{figure}[tp]
\centering
\includegraphics[width=\linewidth]{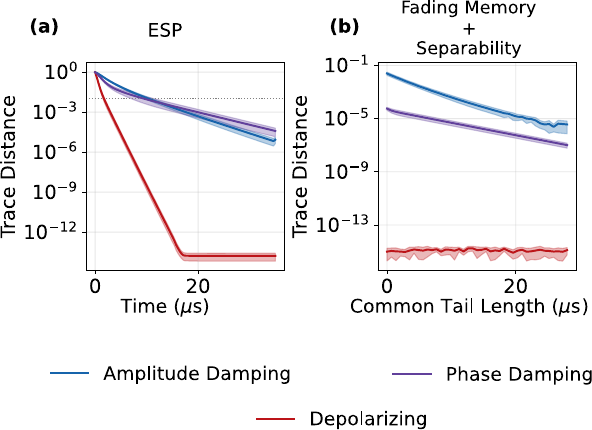}
\caption{Validation of reservoir computing properties for $N=4$ atoms
under amplitude damping (AD, blue), phase damping (PD, purple), and
depolarizing (Dep, red) channels at a fixed decay rate
$\gamma=0.4\,\mu\text{s}^{-1}$.
(a) Trace distance between reservoirs initialized in $10$ distinct
random states and driven by identical inputs. The decay to zero confirms
the echo state property (ESP): Dep converges almost instantaneously, while
AD and PD both drop below $1\%$ within $10\,\si{\micro\second}$ and below
$0.1\%$ after $20\,\si{\micro\second}$, defining the washout period.
(b) Trace distance between reservoirs driven by distinct input sequences
that share a common tail of increasing length.
Dep fails separability entirely, PD loses it over longer timescales, while AD retains distinguishability over the longest timescale, with a four-order-of-magnitude decay followed by a slowly relaxing regime.}
\label{fig:echo_state_property}
\end{figure}

To rigorously test the ESP, we generate a set of $10$ random initial quantum states. Due to the high dimensionality of the Hilbert space, any two such randomly selected states are nearly orthogonal, providing a stringent test of initial state independence. We initialize identical reservoirs with these states, drive them using the exact same input time series, and evaluate the trace distance for every possible pair. 
As shown in panel~(a) of Figure~\ref{fig:echo_state_property}, the pairwise
trace distances asymptotically approach zero for all noise models, confirming
that the reservoir successfully forgets its initial conditions.
While the Dep channel erases this initial state dependence almost instantaneously, the AD and PD channels still achieve a trace distance below $1\%$ within $10\,\si{\micro\second}$ and drop to $0.1\%$ after $20\,\si{\micro\second}$. This $20\,\si{\micro\second}$ convergence timescale physically dictates our washout period. 
Corresponding to approximately $57$ time steps in our simulated sequences, we systematically discard these initial transient states during the training phase to ensure that the readout layer only optimizes over states driven entirely by the input history rather than the initial quantum state.

Next, we assess fading memory and separability. To strictly isolate the influence of the input history, we initialize two reservoirs in the identical quantum state and drive them with $10$ pairs of distinct input sequences. Each pair of sequences differs initially but converges into an identical final segment, referred to as a common tail. As illustrated in panel~(b) of Figure~\ref{fig:echo_state_property}, fading
memory is demonstrated by the gradual decay of the trace distance as the
length of this common tail increases. 
Fading memory and separability refer to different temporal regimes. Fading memory demands that the influence of inputs preceding the common tail decays as the tail grows, with the ESP guaranteeing its eventual vanishing in the infinite-tail limit. Separability instead demands that input histories differing \textit{within the memory horizon} of the reservoir produce distinguishable states, i.e., that distinct recent inputs are not collapsed onto the same state faster than the memory timescale itself. In the common-tail experiment, the signature of a viable reservoir is therefore a trace distance that starts above the noise level and decays on a finite, task-relevant timescale. 

The Dep channel fails already at the initial stage: after the washout, both reservoirs have already relaxed to the input-independent fixed point. The PD channel exhibits the correct signature: an initial distance that decays as the tail grows. However, this occurs on a short timescale, which is consistent with the weak residual memory discussed in the previous section. The AD channel exhibits the desired behavior: the trace distance decreases by four orders of magnitude over the entire simulated period, far exceeding that of PD.
These foundational tests confirm that AD is uniquely suited to process temporal information, motivating our focus on this dissipative regime in the following performance evaluations.

\subsection{Short-term memory: linear regime}
To evaluate the temporal processing capabilities of the neutral atom array, we employ the Short-Term Memory (STM) task~\cite{jaeger2001}. This benchmark quantifies the ability of the reservoir to retain and reconstruct past inputs, providing a direct measure of its fading memory. Specifically, the STM task requires the linear readout to reconstruct a past input signal $u_{k-\tau}$ at a given delay $\tau$, where the input stream $u_k$ consists of independent and identically distributed (i.i.d.) random values sampled uniformly from the interval $[0, 1]$. The performance at each delay is evaluated using the memory capacity ($MC_\tau$), defined as the squared Pearson correlation coefficient between the target sequence and the predictions:
\begin{equation}
    MC_\tau = \frac{\text{Cov}^2\left[y_k(\tau), \widehat{y}_k\right]}{\text{Var}\left[y_k(\tau)\right]\text{Var}\left[\widehat{y}_k\right]},
\end{equation}
where $y_k(\tau) = u_{k-\tau}$ is the target and $\widehat{y}_k$ is the predicted output. To be formally consistent throughout this work, these statistical moments are evaluated as empirical temporal averages over the discrete test interval of length $L_\text{test}$. Specifically, we denote this temporal expectation as $\mathbb{E}[X] = \frac{1}{L_\text{test}} \sum_{k=1}^{L_\text{test}} X_k$, such that the variance is $\text{Var}[X] = \mathbb{E}\left[(X - \mathbb{E}[X])^2\right]$ and the covariance is $\text{Cov}[X, Y] = \mathbb{E}\left[(X - \mathbb{E}[X])(Y - \mathbb{E}[Y])\right]$.

\begin{figure*}[tp]
\centering
\includegraphics[width=0.9\linewidth]{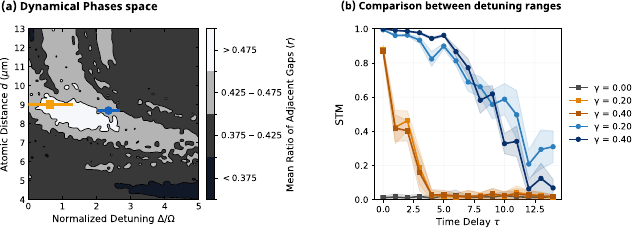}
\caption{Phase space analysis and memory capacity comparison for the Short Term Memory (STM) task for a system of 8 atoms. 
(a) Contour plot of the mean ratio of adjacent gaps characterizing the dynamical phase space as a function of the interatomic distance $d$ and the normalized global detuning $\Delta/\Omega$. The yellow region indicates the standard operating range utilized in previous QuEra implementations ($d=9\,\mu\text{m}$, $\Delta_{\text{center}}=0.64\,\Omega$)~\cite{kornjaca2024}, while the blue region highlights our proposed operating point at the edge of chaos ($d=8.69\,\mu\text{m}$, $\Delta_{\text{center}}=2.35\,\Omega$). 
(b) Memory capacity as a function of the delay time $\tau$. The colors correspond to the operating regimes defined in panel (a). For both regimes, we evaluate the unitary baseline ($\gamma=0$, gray) and dissipative dynamics ($\gamma=0.2\,\mu\text{s}^{-1}$, light shades, $\gamma=0.4\,\mu\text{s}^{-1}$, dark shades). The results demonstrate that positioning the system at the edge of chaos (blue) significantly outperforms the prior baseline (yellow) under dissipative conditions.}
\label{fig:memory_capacity_phasespace}
\end{figure*}

Before comparing different dissipative channels, we first determine the optimal physical parameters of the Rydberg array to maximize its intrinsic computational capacity.
Figure~\ref{fig:memory_capacity_phasespace}(a) illustrates the dynamical phase space of a system of 8 atoms at different interatomic distances $d$ and the normalized global detunings $\Delta/\Omega$. 
In order to characterize such dynamical phases, first we evaluate the ratio of adjacent gaps $r_n = \min(\Delta E_n, \Delta E_{n+1})/\max(\Delta E_n, \Delta E_{n+1})$, where $\Delta E_n = E_{n}-E_{n-1}$ is the gap between adjacent eigenenergies, and then take the average $\expval{r}$~\cite{Oganesyan2007}. 
This spectral statistic serves as a precise diagnostic tool to distinguish between integrable (Poisson statistics) and chaotic (Wigner-Dyson statistics) dynamics. Within this phase space, we highlight two distinct operational regimes. The yellow region represents the standard parameter space utilized in~\cite{kornjaca2024}, with $d=9\,\mu\text{m}$, $\Delta_{\text{center}}=0.64\,\Omega$, and a detuning range $\delta = 0.64\,\Omega$. In contrast, the blue region marks our proposed operating point. Here, we set $d=8.69\,\mu\text{m}$ to achieve a higher participation ratio of the all-ground initial state across the Rydberg Hamiltonian eigenbasis, hence a higher generalized multifractal dimension, and we center the detuning at $\Delta_{\text{center}}=2.35\,\Omega$ with a tighter range $\delta = 0.3\,\Omega$ 
to position the system exactly at the critical transition edge between integrability and full quantum chaos~\cite{prati2015towards}.

To validate this architectural choice, Figure~\ref{fig:memory_capacity_phasespace}(b) compares the memory capacity as a function of the delay $\tau$ for both configurations. We evaluate the performance under purely unitary and dissipative evolutions. As theoretically expected, the unitary baseline fails to exhibit fading memory. However, upon introducing controlled dissipation, our optimized configuration at the edge of chaos significantly outperforms the previous baseline, exhibiting a substantially higher total memory capacity and retaining information over longer delay steps. 
These results confirm that while dissipation is strictly necessary to enforce the echo state property, tuning the global Hamiltonian to the edge of chaos is equally crucial to maximize the high-dimensional processing power of the atomic reservoir.

With the physical array parameters secured at the edge of chaos, we next isolate the impact of different noise models based on the performance of the reservoir. The results, summarized in Figure~\ref{fig:noise_comparison}, are averaged over $10$ independent realizations of the random input signal to ensure statistical robustness. We observe that the unitary baseline exhibits negligible memory since, in the absence of dissipation, the system fails to fade its past inputs.

\begin{figure*}[tp]
\centering
\includegraphics[width=0.8\linewidth]{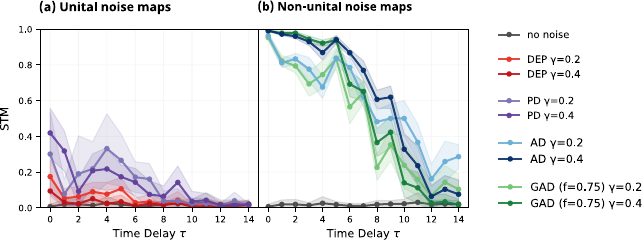}
\caption{Memory capacity on the Short-Term Memory (STM) task under different dissipative regimes. 
(a) Unital channels: depolarizing (Dep, red) and phase damping (PD, purple). 
(b) Non-unital channels: amplitude damping (AD, blue) and generalized amplitude damping (GAD, green). 
We consider here again $N=8$ atoms.
Solid lines represent the average performance over $10$ independent input realizations, while the shaded bands denote the standard deviation.
Lighter colors correspond to a dissipation rate $\gamma=0.2\,\mu\text{s}^{-1}$, while darker colors indicate $\gamma=0.4\,\mu\text{s}^{-1}$. 
The gray line represents the unitary baseline ($\gamma=0$), which fails to encode memory due to the lack of the echo state property. 
While Dep erases information by driving the system to the maximally mixed state, AD maximizes capacity by acting as an ergodic, state-purifying resource. Notably, PD exhibits non-trivial memory despite being unital, due to Hamiltonian-induced ergodicity restoration.}
\label{fig:noise_comparison}
\end{figure*}

As can be seen in Figure~\ref{fig:noise_comparison}(a), in the unital regime, the Dep channel behaves as expected, driving the system toward the maximally mixed state, effectively erasing all stored information and resulting in near-zero capacity.
However, the PD channel presents a notable anomaly. Despite being intrinsically unital, non-mixing, and non-ergodic, it exhibits a non-negligible memory capacity. Tests with a very low Rabi frequency validate our theoretical model of ergodicity restoration: the interplay between the dephasing operators and the non-commuting Rydberg Hamiltonian prevents the system from stagnating in a symmetry subspace, allowing it to retain a small amount of memory. 

As shown in Figure~\ref{fig:noise_comparison}(b), the non-unital channels yield the highest performance. Specifically, the AD channel achieves the global maximum in memory capacity: by being non-unital, ergodic, and mixing, it drives the system toward a pure state (the input dependent ground state), balancing the necessary contraction of the state space with the retention of input information. The generalized amplitude damping (GAD) follows a similar trend but performs slightly worse than AD, likely due to the shift of the steady-state towards a mixed state.

Since amplitude damping provides the optimal dissipative environment, we now tackle the fundamental challenge of scalability. Beyond increasing the physical system size, it is common, also in QRC \cite{Martinez-Pena2023}, to exploit time-multiplexing. Figure~\ref{fig:memory_capacity_atom_virtual} contrasts these two approaches by analyzing the total memory capacity ($MC_{\Sigma 15}$) summed over a total of $15$ delays:
\begin{equation}
    MC_{\Sigma 15} = \sum_{\tau=0}^{15} MC_\tau.
\end{equation}

In the case of physical scaling shown in Figure~\ref{fig:memory_capacity_atom_virtual}(a), we observe a robust, monotonic increase in memory capacity as the number of atoms $N$ grows. This stability indicates that each additional atom contributes linearly independent degrees of freedom to the state space of the atomic reservoir, confirming that the long-range Rydberg interactions effectively distribute information across the physical array without introducing redundant correlations.

Conversely, the time-multiplexing approach in Figure~\ref{fig:memory_capacity_atom_virtual}(b) demonstrates that introducing virtual nodes $V$ leads to a strong enhancement of the capacity, increasing almost linearly up to a saturation limit, which in this case is reached at $V = 5$. Beyond this optimal threshold, the capacity plateaus due to the onset of multi-collinearity within the extracted observables. As the number of virtual time steps increases, the subsequent measurement outcomes become increasingly linearly dependent, failing to span new, orthogonal dimensions in the feature space of the reservoir \cite{Martinez-Pena2023}.

Beyond optimizing the internal Hamiltonian and the dissipative environment, the performance of a quantum reservoir heavily depends on how classical data is injected into the quantum system and how the resulting high-dimensional state is extracted. Until now, we have relied on the established global detuning encoding coupled with population measurements in the computational basis ($\sigma_z$ readout). To fully explore the capabilities of the neutral atom platform, we investigate alternative strategies by varying how information is injected and extracted. We contrast longitudinal detuning encoding with transverse Rabi amplitude encoding, and pair each with either $\sigma_z$ or $\sigma_x$ measurements. The resulting four configurations also allow us to evaluate performance when the readout observables commute with the encoding operators, as well as when they do not.

\begin{figure}[tp]
\centering
\includegraphics[width=0.9\linewidth]{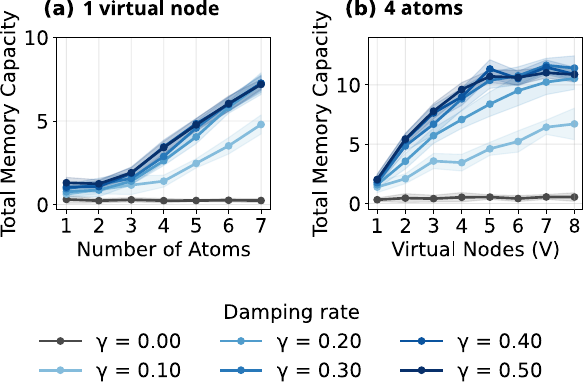}
\caption{Total memory capacity summed over delays $\tau=0$ to $15$ under amplitude damping. We compare the performance of physical scaling against time-multiplexing strategies. 
(a) Capacity as a function of the number of physical atoms $N$ with $V=1$. 
(b) Capacity as a function of the number of virtual nodes $V$ for a fixed system size of $N=4$. 
The curves correspond to damping rates $\gamma$ ranging from $0.0$ (gray, unitary) to $0.5$. 
Solid lines represent the average over $10$ independent realizations while shaded regions denote the $\pm 1\sigma$ standard deviation.}
\label{fig:memory_capacity_atom_virtual}
\end{figure}

To rigorously evaluate these configurations, we simulate an open-system quantum
register of $N=8$ atoms driven by a random input time series of $100$ timesteps following the wash-out phase.
Throughout these evaluations, the system is subjected to AD with a fixed decay rate
of $\gamma = 0.4\,\mu\text{s}^{-1}$. Figure~\ref{fig:rabi_encoding} compares the
memory capacity across four distinct configurations, testing both encoding channels
(global detuning vs.\ global Rabi) and readout bases ($\sigma_z$ vs.\ $\sigma_x$).
In all cases, single-qubit expectation values are augmented with the corresponding
all-to-all two-body correlators: $\langle \sigma_z^{(i)} \sigma_z^{(j)} \rangle$
for the $\sigma_z$ readout and $\langle \sigma_x^{(i)} \sigma_x^{(j)} \rangle$
for the $\sigma_x$ readout.

In Figure~\ref{fig:rabi_encoding}(a), we track the memory capacity as a function of the time delay $\tau$. Here, we observe a clear performance hierarchy regarding the input channel: the global detuning encoding maintains much higher memory retention across extended delays, distinctly outperforming the Rabi amplitude encoding regardless of the chosen readout. Less obvious, however, is the optimal readout strategy. 

This is clarified in Figure~\ref{fig:rabi_encoding}(b), which integrates the total memory capacity over $15$ delay steps for each strategy. The results reveal that the best-performing configuration overall is the global detuning encoding combined with the $\sigma_x$ readout, achieving a total memory capacity of $10.6$, roughly $9\%$ higher than standard population measurements ($\sigma_z$). 
Reading out transverse magnetizations therefore provides slightly better features for the linear regression layer. Regardless of the readout basis, Rabi amplitude encoding performs worst overall, with a reduction of more than $50\%$ compared to the detuning-based strategies.

\begin{figure}[tp]
\centering
\includegraphics[width=0.99\linewidth]{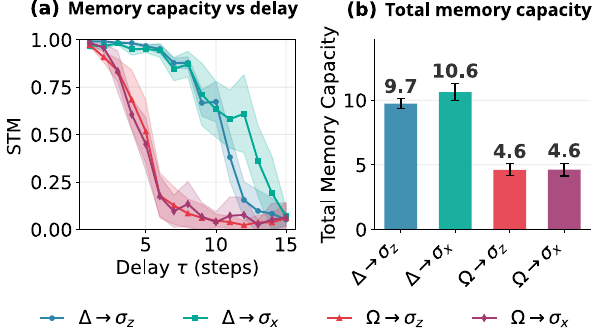}
\caption{Comparison of input encoding and readout strategies on the
Short-Term Memory (STM) task, evaluated on a system of $N=8$ atoms with
$V=4$ virtual nodes under amplitude damping at $\gamma=0.4\,\mu\text{s}^{-1}$.
(a) Memory capacity as a function of delay $\tau$ for four configurations:
global detuning encoding with $\sigma_z$ readout (blue), global detuning
encoding with $\sigma_x$ readout (green), Rabi amplitude encoding with
$\sigma_z$ readout (red), and Rabi amplitude encoding with $\sigma_x$
readout (purple). In all cases, single-qubit expectation values are
augmented with all-to-all two-body correlators to form the complete feature
vector.
(b) Total memory capacity integrated over $15$ delay steps for each
strategy. All data represent the mean over $10$ independent input
realizations, shaded bands denote $\pm1\sigma$.}
\label{fig:rabi_encoding}
\end{figure}

\subsection{NARMA-$p$: non-linear regime}

To evaluate how well the reservoir is capable of processing complex, non-linear temporal dependencies, we test its performance on the well-established NARMA-$p$ (Nonlinear Autoregressive Moving Average) benchmark~\cite{Atiya2000narma}. The task requires the reservoir to predict the output $y_{k+1}$ of a highly non-linear dynamical system driven by an independent and identically distributed random input stream $u_k$ sampled uniformly from the interval $[0, 0.5]$. Depending on the temporal order $p$, the target sequence is generated according to specific recurrence relations. For $p=2$, the dynamics are governed by
\begin{equation}
y_{k+1} = a y_k + b y_k y_{k-1} + c u_k^3 + d.
\end{equation}
For higher orders ($p > 2$), the target incorporates a non-linear moving average over the past $p$ steps alongside a delayed input product
\begin{equation}
    y_{k+1} = a y_k + b y_k \sum_{i=0}^{p-1} y_{k-i} + c u_{k-p+1} u_k + d.
\end{equation}
The specific parameters $\{a, b, c, d\}$ are selected to ensure stable but highly complex dynamics, as detailed in Figure~\ref{fig:narma}. 
To rigorously quantify the predictive accuracy, we utilize the Standardized Mean Squared Error (SMSE), defined as the mean squared error normalized by the variance of the target sequence
\begin{equation}
\text{SMSE} = \frac{\mathbb{E}\left[(y_k - \widehat{y}_k)^2 \right]}{\text{Var}(y_k)}
\end{equation}
where $y_k$ is the true target and $\widehat{y}_k$ is the prediction. An SMSE of 0 indicates perfect prediction, whereas an SMSE of order 1 indicates performance comparable to the trivial mean predictor.
For these evaluations, we fix $V=4$ to benefit from time-multiplexing without incurring multi-collinearity. Based on our characterization of the dissipative channels, we focus on the amplitude damping model and evaluate the predictive accuracy for task orders $p$ ranging from 2 to 10. The test SMSE for these tasks is presented in Figure~\ref{fig:narma}.

In the purely unitary regime ($\gamma=0$), the reservoir lacks fading memory and cannot resolve the non-linear combination of past inputs. Consequently, the system fails to fit the target dynamics, resulting in an SMSE greater than $10$ across all tested values of $p$. 
However, the introduction of amplitude damping restores the echo state property and allows the quantum reservoir to successfully capture the complex transformations of the input sequence. For the dissipative regimes, the reservoir achieves high predictive accuracy. We observe a monotonic increase in the SMSE from approximately $0.07$ for $p=2$ up to $0.4$ for $p=10$. This gradual degradation in performance is theoretically expected, as higher-order NARMA tasks require the reservoir to retain and non-linearly mix information over increasingly longer temporal windows, challenging the capacity limits of the neutral atoms array. Nevertheless, the low error rates for intermediate task orders confirm that the analog neutral atom platform is highly capable of executing non-linear time-series forecasting when properly regularized by dissipation.

Motivated by these results, a natural research question is whether the physical array can be pushed further to predict highly non-linear and chaotic dynamics. We explore this boundary in the following Section by evaluating the reservoir on the Mackey-Glass forecasting task.
\begin{figure}[tp]
\centering
\includegraphics[width=0.9\linewidth]{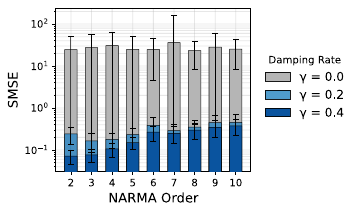}
\caption{Test Standardized Mean Squared Error (SMSE) for the NARMA-$p$ tasks for orders $p$ ranging from 2 to 10 under amplitude damping. 
Bars indicate different dissipation rates with $\gamma=0$ (gray) , $\gamma=0.2\,\mu\text{s}^{-1}$ (light blue), and $\gamma=0.4\,\mu\text{s}^{-1}$ (dark blue). 
Error bars denote one standard deviation evaluated over 100 independent random input realizations. 
The parameters $(a, b, c, d)$ for the NARMA equation are set to $(0.4,\,0.4,\,0.6,\,0.1)$ for $p=2$, and to $(0.3,\,0.05,\,1.5,\,0.1)$ for $p > 2$. 
The unitary baseline ($\gamma=0$) completely fails to fit the non-linear dynamics, yielding an SMSE $> 10$ across all tasks. Conversely, the introduction of dissipation enables accurate forecasting, with the SMSE monotonically increasing from approximately $0.07$ to $0.4$ as the temporal complexity $p$ grows.}
\label{fig:narma}
\end{figure}

\subsection{Mackey-Glass: chaotic forecasting and intrinsic memory}
The Mackey-Glass time-delay differential equation provides an ideal testbed for assessing chaotic forecasting capabilities~\cite{mackeyglass1977}. Widely recognized as a standard benchmark for reservoir computing, the system exhibits a complex chaotic attractor and an infinite-dimensional phase space, presenting a significantly harder challenge than the NARMA task.

To rigorously test the predictive power of our analog platform, we fix the number of virtual nodes to $V=6$ and first determine the optimal operating parameters for one-step-ahead prediction ($k=1$). Figure~\ref{fig:mackey-glass}(a) illustrates the test SMSE as a function of the amplitude damping rate $\gamma$. We identify a clear performance optimum at $\gamma=0.6\,\mu\text{s}^{-1}$. Conversely, operating at dissipation rates that are either too low or excessively high leads to a pronounced degradation in predictive accuracy. This optimal value represents the ideal balance for the fading memory timescale: it dissipates information fast enough to prevent the chaotic history from saturating the reservoir, yet slowly enough to retain the short-term dependencies required to reconstruct the attractor.

Next, we evaluate the impact of the system size $N$ on the predictive accuracy, shown in Figure~\ref{fig:mackey-glass}(b).
Notably, the reservoir achieves high precision even at small scale: an array of just $N=4$ atoms is sufficient to reach a test SMSE on the order of $10^{-5}$ in the ideal case, with further increases in system size yielding marginal changes in performance. 
Operating within this regime ($N=5$, $\gamma=0.6\,\mu\text{s}^{-1}$), the reservoir demonstrates remarkable accuracy. As visualized in Figure~\ref{fig:mackey-glass}(c), the predicted output closely matches the highly non-linear oscillations of the target Mackey-Glass time series for one-step-ahead forecasting.

Finally, we push the reservoir to its limits by increasing the prediction horizon $k$, requiring the system to forecast further into the future. Because the Mackey-Glass system is strictly chaotic, neighboring trajectories diverge exponentially, making long-term prediction notoriously difficult. Figure~\ref{fig:mackey-glass}(d) displays the SMSE as a function of $k$ ranging from $1$ to $30$. As theoretically expected, the error increases monotonically with the prediction horizon. Remarkably, the reservoir maintains a test SMSE strictly below a $0.1$ threshold for all horizons up to $k=20$. These results demonstrate that an optimized, fully analog quantum reservoir requiring as few as 5 physical atoms can successfully process and forecast complex chaotic dynamics.

\begin{figure}[tp]
\centering
\includegraphics[width=\linewidth]{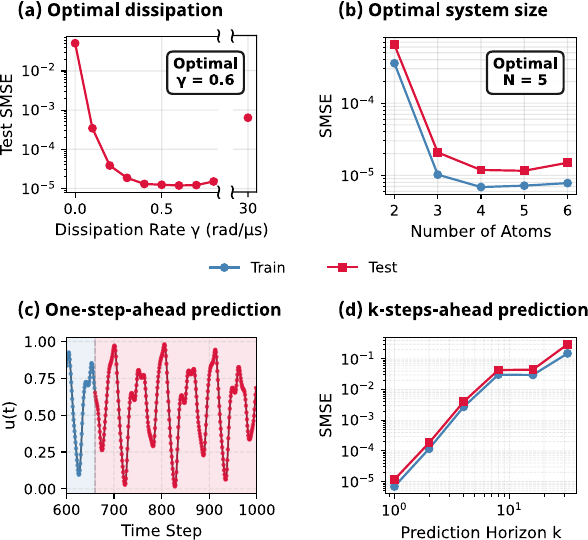}
\caption{Performance of the quantum reservoir on the chaotic Mackey-Glass forecasting task under amplitude damping ($V=6$ virtual nodes).
(a) Test SMSE for one-step-ahead prediction as a function of $\gamma$, identifying an optimal regime at $\gamma=0.6\,\mu\text{s}^{-1}$.
(b) Train (blue) and test (red) SMSE versus the number of physical atoms $N$. High predictive accuracy is achieved already at small scale, reaching a test SMSE on the order of $10^{-5}$ for $N=4$, with performance effectively saturating as $N$ increases.
(c) One-step-ahead prediction (red) overlaid on the target Mackey-Glass signal (black) for a representative array size ($N=5$, $\gamma=0.6\,\mu\text{s}^{-1}$). The two curves are nearly indistinguishable, with the target largely obscured by the prediction, indicating high forecasting accuracy.
(d) Train (blue) and test (red) SMSE as a function of the prediction horizon $k$. Performance degrades monotonically as expected for a chaotic system; the reservoir maintains a test SMSE below $0.1$ for all horizons up to $k=20$.}

\label{fig:mackey-glass}
\end{figure}

\subsection{Santa Fe: quantum memory versus classical time-windowing}

To further validate these chaotic forecasting capabilities and to provide a direct benchmark against prior state-of-the-art implementations, we evaluate the reservoir on the widely recognized Santa Fe time-series prediction task~\cite{weigend1994santafe}. To ensure robust statistical evaluation, we employ a temporal subsampling technique on the full dataset (consisting of $8000$ timesteps). Specifically, we train and test the reservoir on a smaller part of $2000$ timesteps, iterating this process $20$ times from different randomly selected starting indices. This procedure yields the average performance and standard deviations, preventing the metrics of the model from being biased by a single specific temporal trajectory.

It is crucial to emphasize that the genuine quantum memory evaluated in the previous forecasting sections corresponds strictly to a classical time-window size of $1$. In those tasks, the network relied entirely on the internal autonomous evolution of the physical reservoir to process history, using only instantaneous output values. In this Section, to systematically explore the interplay between quantum memory and classical buffering, we look at the forecasting performance of the one step Santa-Fe series for  a variable window size. As illustrated in the left schematic of Figure~\ref{fig:santafe}, we contrast two operational modes: a reset protocol, where the array is re-initialized to the ground state at the beginning of each input sequence, and a no-reset protocol, where the density matrix of the reservoir evolves uninterrupted, autonomously sustaining temporal history across steps. The reset protocol effectively mirrors the standard unitary, classically-buffered baseline established in recent literature~\cite{kornjaca2024}.

The performance plot in Figure~\ref{fig:santafe} tracks the test SMSE as a function of the classical time-window size, contrasting the effect of reset, and purely unitary dynamics against our optimized dissipative regime.
A few interesting aspects can be highlighted. First, in the absence of resets, the curves show the rather reduced effect of introducing a temporal window (buffer): the dissipative dynamics is only marginally improved, whereas the unitary dynamics remains markedly poor and inefficient even when increasing the window length. Indeed, the purely unitary predictive performance, in absence of reset does not display a fading memory at difference from the dissipative QRC where the system possesses an intrinsic fading memory that requires no external buffering. 

Second, when resets are included, the unitary dynamics benefits substantially, as the reset introduces an effective memory-shedding mechanism that allows information to be processed more efficiently than in the reset-free case; a similar, though comparatively modest, improvement is also observed in the dissipative case, where resets act as an additional memory-control mechanism. As the classical time-window size increases, the reset protocol gradually improves. Under this protocol, the quantum array acts purely as a static non-linear feature map, meaning that any temporal memory is exclusively provided by the external classical buffer. In this regime, the unitary evolution performs adequately but remains strictly bound by the computational bottleneck of the classical window size.

Finally, and consistently across all configurations, the dissipative dynamics systematically outperforms the unitary one, underscoring the role of dissipation as the dominant resource for effective information processing in this setting.
\begin{figure}[tp]
    \centering
    \includegraphics[width=\linewidth]{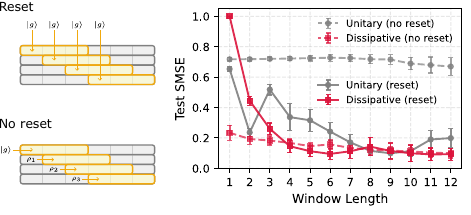}
    \caption{Performance on the chaotic Santa Fe time-series prediction task under different state-preparation protocols and classical time-window sizes. The left schematic illustrates the two operational modes: a reset protocol (solid lines), where the array is re-initialized to the ground state $|g\rangle^{\otimes N}$ at the beginning of each time window, and a no-reset protocol (dashed lines), where the final density matrix of the previous evolution step serves as the initial state for the next. The right plot displays the test Standardized Mean Squared Error (SMSE) as a function of the classical time-window size, comparing purely unitary evolution against our optimized dissipative configuration. All plotted data represent the average over $20$ independent iterations utilizing $2000$-timestep subsampled windows from a full $8000$-timestep series, with error bars indicating the standard deviation.}
    \label{fig:santafe}
\end{figure}
To quantify this advantage, the dissipative case without reset achieves an average SMSE equal to $0.23$, with standard deviation of $0.05$,  without any classical buffering (window length equal to 1). In contrast, the unitary reset protocol of \cite{kornjaca2024} with the same physical array size requires a classical time-window of size at least 7 to reach the same predicting performance.
When evaluated strictly at window length equal to 1, the unitary architecture struggles with significantly higher errors of $0.654\pm 0.015$ (reset) and $0.718\pm 0.012$ (no reset), underscoring that its performance relies almost entirely on the external buffer rather than intrinsic quantum dynamics.

By plotting these configurations side-by-side, the results clearly demonstrate how integrating controlled dissipation effectively substitutes the heavy reliance on classical time-windowing, establishing a true, autonomous quantum reservoir.


\section{Conclusion}\label{sec:conclusion}
In this work, we have comprehensively evaluated the capabilities and fundamental physical requirements of analog quantum reservoir computing using neutral atom arrays. By bridging the gap between quantum non-equilibrium dynamics and machine learning theory, we established a rigorous framework for optimizing both the physical hardware and the operational protocols of the reservoir.

A central result of our investigation is the validation of the echo state property (ESP) within the specific constraints of neutral atom arrays. While the fundamental necessity of dissipation for fading memory is well-established in the broader context of open quantum systems, this requirement has often been overlooked in the recent literature regarding neutral atom temporal processing. Most existing Rydberg-based approaches rely on purely unitary evolution, often compensating for the lack of intrinsic fading memory with classical pre-processing or time-windowing. Our findings demonstrate that an active dissipative channel is a structural path to truly quantum fading memory, bypassing the need for artificial classical buffering.

Leveraging these dissipative requirements, we utilized the universal principle of the edge of quantum chaos to optimize the operational regime of the atomic reservoir. In our architecture, we establish a linear geometry ($d=8.69\,\mu\text{m}$) and the optimal global detuning range ($\Delta_{\text{center}}=2.35\,\Omega$) that positions the Rydberg dynamics at the critical transition between integrable and quantum chaotic motion. Our results confirm that this spectral boundary provides the ideal computational landscape for neutral atom arrays.

Furthermore, our analysis of different noise channels revealed a highly non-trivial physical phenomenon regarding phase damping.
Unital channels typically degrade temporal processing: depolarizing noise
isotropically contracts the state space toward the maximally mixed state,
erasing all stored information. Phase damping, however, retains a non-negligible quantum memory.
We justify this through a theoretical model of ergodicity restoration. As proven in Proposition~\ref{prop:po} and Appendix~\ref{app:restoration}, the interplay between the dephasing operators and the non-commuting Rydberg Hamiltonian decelerates the relaxation into the maximally mixed state, allowing the system to preserve a residual memory of past inputs.


Other operational aspects have been addressed. We addressed the saturation performance for time-multiplexing and the scaling with the system size, where increasing the physical number of atoms seamlessly adds linearly independent degrees of freedom. Furthermore, our analysis of the input injection mechanism revealed a crucial operational asymmetry: encoding the temporal data into the global detuning yields drastically superior memory capacity compared to driving the Rabi amplitude.

Finally, we demonstrated that even highly compact arrays, requiring as few as $5$ physical atoms, are exceptionally capable of solving complex, real-world problems. The successful forecasting of the highly non-linear NARMA tasks, the chaotic Mackey-Glass attractor, and the Santa Fe dataset demonstrates the practical utility of the neutral atom platform. 

An important aspect of this design it that it deploys coherent in-memory processing. The superior performance in forecasting with respect to external buffering and reset strategies has also been shown displaying the importance of coherent fading memory. An alternative strategy beyond output windowing is to implement a feedback mechanism for incoherent memory~\cite{Kobayashi2024,Paparelle2026,selimovic2025} or a hybrid quantum–classical reservoir computing design as proposed with a neutral atom platform in~\cite{gyurik2025}.

By providing a rigorous physical framework for temporal information processing, this systematic approach successfully establishes the design principles for realizing analog quantum reservoir computing on neutral atoms from first principles. Ultimately, this work establishes crucial design aspects that can guide the future development of quantum reservoir computing in atomic platforms endowed with genuine, internal quantum memory. As experimental hardware continues to scale, we anticipate that these fundamental blueprints, ranging from the exploitation of the edge of chaos to the strategic harnessing of open-system dissipation, will pave the way for fully autonomous quantum processors capable of tackling complex, real-time temporal tasks natively at the quantum level.


\begin{acknowledgments}
L.N. and E.P. gratefully acknowledge financial support from Eni SpA through the co-funding of the DM 117/2023 PNRR PhD grant.
    We acknowledge support from the Spanish State Research Agency, through the María de Maeztu project CEX2021-001164-M, funded by MICIU/AEI/10.13039/501100011033; through the CoQuSy project PID2022-140506NB-C21 and -C22 funded by MICIU/AEI/10.13039/50110001103 and by ERDF, EU; and through the QuantCom project CNS2024-154720, funded by MICIU/AEI/10.13039/501100011033 and co-funded by the European Union; the project is funded under the Quantera II program that has received funding from the EU’s H2020 research and innovation program under Grant Agreement No. 101017733, and from the Spanish State Research Agency (project CoQuaDis PCI2024-153446) funded by MICIU/AEI/10.13039/50110001103; we also acknowledge CSIC's Quantum Technologies Platform (QTEP).

\end{acknowledgments}

\section*{COMPETING INTERESTS}
The authors declare no competing interests.

\section*{CODE AVAILABILITY}
The code that supports the findings of this study is available from the corresponding author upon reasonable request.


\appendix

\section{Simulation Methods}\label{app:simulation}

The interaction coefficient $C_6$ depends directly on the principal quantum number of the chosen Rydberg state $|r\rangle$. Throughout this work, we fix the Rydberg level to $n=70$, serving as a highly representative benchmark that falls within the standard operational range ($n \approx 50-100$) of current programmable neutral atom hardware, such as the Pasqal quantum processors~\cite{Henriet2020}. Operating at this physically realistic $n=70$ level yields an interaction coefficient of $C_6 \approx 5.42 \times 10^6 \, \mu\text{m}^6/\mu\text{s}$. Regarding the physical geometry, commonly referred to as the quantum register, we arrange the atoms in an open one-dimensional topology.

The extracted observables from the reservoir are concatenated into a high-dimensional state vector $\mathbf{x}_k$, which serves as the input for the classical readout layer. Over a given temporal sequence, these state vectors form the feature matrix $\mathbf{X}$, and the corresponding target outputs form the vector $\mathbf{Y}$. The ultimate prediction of the reservoir is obtained through a linear combination of these measured features, yielding the prediction $\mathbf{\widehat{Y}} = \mathbf{X} \mathbf{W}_\text{out}$. To compute the optimal readout weight vector $\mathbf{W}_\text{out}$, we employ Ridge regression:
\begin{equation}
\mathbf{W}_\text{out} = (\mathbf{X}^T \mathbf{X} + \alpha \id)^{-1} \mathbf{X}^T \mathbf{Y}.
\end{equation}
This $L_2$-regularized linear regression is strictly necessary for mitigating multi-collinearity and preventing the model from overfitting to the highly correlated training data. Throughout our evaluations, we fix the Ridge regularization parameter to $\alpha = 10^{-7}$, providing a stable and robust mapping from the quantum observables to the target output sequences.

To rigorously model these continuous-time dynamics and faithfully replicate the experimental capabilities of neutral-atom devices, all quantum evolutions in this study are executed using Pulser~\cite{Silverio2022}. By integrating the sequence design of Pulser with the QuTiP~\cite{lambert2025qutip} master-equation solver within a custom Python environment, we seamlessly construct the analog laser pulses $\Omega(t)$ and $\Delta(t)$ while computing the continuous temporal evolution of the open-system density matrix under the defined dissipative noise models. The classical readout layer, including data processing and Ridge regression, is optimized using the NumPy library~\cite{harris2020}.

\section{Input encoding configurations}\label{app:input_encoding}
The specific parameter values for both encoding protocols are as follows.
For global detuning encoding, the Rabi frequency is held fixed at
$\Omega = 3\pi\,\text{rad}/\mu\text{s}$, while the optimal
operating point and encoding range are set to $\Delta_{\text{center}} =
2.35\,\Omega$ and $\delta = 0.3\,\text{rad}/\mu\text{s}$. For
global Rabi encoding, the detuning is fixed at $\Delta =
7\pi\,\text{rad}/\mu\text{s}$, while the base frequency and encoding range are
$\Omega_{\text{center}} = 0.9\,\Delta$ and $\delta_\Omega
= 0.1\,\Delta$. In both cases, $u_k \in [0,1]$ is
linearly rescaled onto the respective modulation range, and each time step
has duration $\Delta t = 350\,\text{ns}$.

\section{Proofs of Dynamical Propositions}\label{app:proofs_po}

We provide here formal proofs of Proposition ~\ref{prop:po}, relating the pure output (PO) count to ergodicity and mixing of continuous, non-EB maps. Consider the Lindbladian
\begin{equation}
\mathcal{L}(\rho)=-i[H,\rho]+\sum_k\Big(L_k\rho L_k^\dagger
-\tfrac12\{L_k^\dagger L_k,\rho\}\Big),
\end{equation}
i.e. the GKSL generator of the evolution map $\Phi_t = e^{\mathcal{L}t}$. 
The generator splits into a relaxing (no-jump) term  $\mathcal{L}_R(\rho)=-i\big(H_{\mathrm{eff}}\rho-\rho H_{\mathrm{eff}}^\dagger\big)$, and a jump term $\mathcal{L}_J(\rho)=\sum_k L_k\rho L_k^\dagger$, where $H_{\mathrm{eff}}=H-\tfrac{i}{2}\sum_k L_k^\dagger L_k$ is the effective (non-hermitian) Hamiltonian.

\subsection{Geometric framework}
Following the notation of \cite{Baumgartner2008}, the two terms admit a sharp geometric reading on the state space $\mathbf{S}$, whose boundary $\partial\mathbf{S}$ collects the rank-deficient states (for a qubit, the pure states on the Bloch sphere). For a unit vector $|\varphi\rangle$ the functional $g_\varphi(\rho)=\langle\varphi|\rho|\varphi\rangle\ge0$ defines a supporting hyperplane of $\mathbf{S}$ at every $\rho$ with $|\varphi\rangle\in\ker\rho$; its derivative along the flow,
\begin{equation}
\label{eq:nflux}
n_\varphi(\rho):=\frac{d}{dt}\,g_\varphi(\rho(t))
=\langle\varphi|\mathcal{L}(\rho)|\varphi\rangle \ge 0
\end{equation}
is the rate at which population enters the empty direction $|\varphi\rangle$, non-negative since the flow cannot leave $\mathbf{S}$.
Only the jump term feeds it: for $|\varphi\rangle\in\ker\rho$ each term of $\mathcal{L}_R(\rho)=-i(H_{\mathrm{eff}}\rho-\rho H_{\mathrm{eff}}^\dagger)$ carries a factor $\rho|\varphi\rangle=0$ or $\langle\varphi|\rho=0$, so $\langle\varphi|\mathcal{L}_R(\rho)|\varphi\rangle=0$, whereas $\mathcal{L}_J$ injects population through the quantum jumps.
The split $\mathcal{L}=\mathcal{L}_R+\mathcal{L}_J$ is thus the separation between the drift tangent to $\partial\mathbf{S}$ and the flux normal to it. The two parts are tied by trace preservation, the anti-Hermitian identity
\begin{equation}
\label{eq:antiherm}
H_{\mathrm{eff}}-H_{\mathrm{eff}}^\dagger=-i\sum_k L_k^\dagger L_k ,
\end{equation}
which equates the no-jump depletion rate to minus the total jump rate.

\begin{lemma}[dark-state condition]\label{lemma:dark}
Let $P=|\psi\rangle\langle\psi|$. Then $\mathcal{L}(P)=0$ if and only if $P$ is a common eigenoperator of the jump superoperator $\mathcal{L}_J$ and of the relaxing superoperator $\mathcal{L}_R$. One can also check that the two eigenvalues are then opposite,
\begin{equation}
\mathcal{L}_J(P)=\Lambda\,P,\quad \mathcal{L}_R(P)=-\Lambda\,P
\end{equation}
with $\Lambda\ge 0$.
\end{lemma}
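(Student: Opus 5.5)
The plan is to prove the two implications separately. Only the ``only if'' direction carries content, and there I will combine the supporting-hyperplane observation of Eq.~\eqref{eq:nflux} with the positivity of the jump superoperator.

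The ``if'' direction is immediate. If $\mathcal{L}_J(P)=\Lambda P$ and $\mathcal{L}_R(P)=-\Lambda P$, then $\mathcal{L}(P)=\mathcal{L}_R(P)+\mathcal{L}_J(P)=0$. For the converse, assume $\mathcal{L}(P)=0$, so that $\mathcal{L}_J(P)=-\mathcal{L}_R(P)$. Take any unit vector $|\varphi\rangle\perp|\psi\rangle$, so that $|\varphi\rangle\in\ker P$. As already observed in the geometric framework, each term of $\mathcal{L}_R(P)=-i(H_{\mathrm{eff}}P-PH_{\mathrm{eff}}^\dagger)$ contains a factor $P|\varphi\rangle=0$ or $\langle\varphi|P=0$. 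Hence $\langle\varphi|\mathcal{L}_R(P)|\varphi\rangle=0$, and therefore $\langle\varphi|\mathcal{L}_J(P)|\varphi\rangle=0$.

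The key step is to expand this last quantity explicitly:
\[
\langle\varphi|\mathcal{L}_J(P)|\varphi\rangle=\sum_k |\langle\varphi|L_k|\psi\rangle|^2 .
\]
This is a sum of nonnegative terms, so each term vanishes. Since this holds for every $|\varphi\rangle$ orthogonal to $|\psi\rangle$, each $L_k|\psi\rangle$ has no component outside $\mathrm{span}\{|\psi\rangle\}$. Thus $L_k|\psi\rangle=c_k|\psi\rangle$ for some $c_k\in\mathbb{C}$, i.e. $|\psi\rangle$ is a common eigenvector of all jump operators (a dark or decoherence-free direction). It then follows that $\mathcal{L}_J(P)=\sum_k L_k|\psi\rangle\langle\psi|L_k^\dagger=\big(\sum_k|c_k|^2\big)P=\Lambda P$ with $\Lambda=\sum_k|c_k|^2\ge0$. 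Finally, $\mathcal{L}_R(P)=-\mathcal{L}_J(P)=-\Lambda P$, so $P$ is a common eigenoperator of both superoperators with opposite eigenvalues.

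As a consistency check, I will verify the eigenvalue of $\mathcal{L}_R$ directly through the anti-Hermitian identity \eqref{eq:antiherm}. Taking $\mathrm{Tr}\,\mathcal{L}_R(P)=-\langle\psi|\sum_kL_k^\dagger L_k|\psi\rangle=-\sum_k|c_k|^2$ confirms that the eigenvalue is $-\Lambda$, matching trace preservation.

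The only delicate point is passing from the vanishing of the normal flux to the exact eigenvector relation $L_k|\psi\rangle\propto|\psi\rangle$. This relies on the positivity of each summand $L_kPL_k^\dagger$ and on quantifying over all of $\ker P$, not over a single direction. In the qubit case $\ker P$ is one-dimensional, which makes this step trivial, but the argument above works in any dimension.
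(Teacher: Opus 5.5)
Your ``only if'' direction is the paper's argument. Stationarity kills the normal flux in the kernel direction, the relaxing part contributes nothing there, and the sum of squares $\sum_k|\langle\varphi|L_k|\psi\rangle|^2=0$ forces $L_k|\psi\rangle=c_k|\psi\rangle$, giving $\mathcal{L}_J(P)=\Lambda P$ and $\mathcal{L}_R(P)=-\Lambda P$. Quantifying over all of $\ker P$, instead of only $\psi^\perp$, is a harmless generalization beyond the qubit case.

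\textbf{The gap is in the converse.} You take as hypothesis that $\mathcal{L}_J(P)=\Lambda P$ and $\mathcal{L}_R(P)=-\Lambda P$. The lemma's right-hand side, however, is only that $P$ is a common eigenoperator, $\mathcal{L}_J(P)=\alpha P$ and $\mathcal{L}_R(P)=\beta P$, with $\alpha,\beta$ a priori unrelated. That $\beta=-\alpha$ is part of what must be proved; assuming it makes the converse a tautology. Your remark that only the ``only if'' direction carries content is where this slips: the content of the converse is that no common eigenoperator can have unbalanced eigenvalues.

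The paper obtains the balance in three steps:
\begin{itemize}
\item from $\mathcal{L}_J(P)\propto P$ it extracts $L_k|\psi\rangle=\lambda_k|\psi\rangle$;
\item from $\mathcal{L}_R(P)\propto P$ it extracts $H_{\mathrm{eff}}|\psi\rangle=\mu|\psi\rangle$, so that $\mathcal{L}_R(P)=2\,\mathrm{Im}(\mu)\,P$;
\item it uses the anti-Hermitian part of $H_{\mathrm{eff}}$ to show $\mathrm{Im}\,\mu=-\Lambda/2$.
\end{itemize}

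Your ``consistency check'' already contains a shorter route. For any pure $P$, \eqref{eq:antiherm} gives
\[
\mathrm{Tr}\,\mathcal{L}_R(P)=-\langle\psi|\textstyle\sum_kL_k^\dagger L_k|\psi\rangle=-\mathrm{Tr}\,\mathcal{L}_J(P).
\]
Taking traces of $\mathcal{L}_J(P)=\alpha P$ and $\mathcal{L}_R(P)=\beta P$ then yields $\beta=-\alpha$, with $\alpha=\sum_k\langle\psi|L_k^\dagger L_k|\psi\rangle\ge0$. Hence $\mathcal{L}(P)=(\alpha+\beta)P=0$. Moving this computation out of the forward-direction check and into the converse closes the proof. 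It is also more economical than the paper's detour through the eigenvectors of $H_{\mathrm{eff}}$.
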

Hence, stationarity is the exact balance of the jump gain and the no-jump loss.

\textit{Proof.} ($\Rightarrow$) If $\mathcal{L}(P)=0$, the normal flux in the kernel direction $\varphi=\psi^\perp$ vanishes; by \eqref{eq:nflux} only the jump term contributes,
\begin{equation}
n_{\psi^\perp}(P)=\langle\psi^\perp|\mathcal{L}_J(P)|\psi^\perp\rangle
=\sum_k\big|\langle\psi^\perp|L_k|\psi\rangle\big|^2=0 ,
\end{equation}
a sum of squares, forcing $L_k|\psi\rangle=\lambda_k|\psi\rangle$ for every $k$. Hence $\mathcal{L}_J(P)=\Lambda P$ with $\Lambda=\sum_k|\lambda_k|^2\ge0$ and $\mathcal{L}_R(P)=-\mathcal{L}_J(P)=-\Lambda P$. ($\Leftarrow$) 
Conversely, $\mathcal{L}_J(P)\propto P$ requires $L_k|\psi\rangle=\lambda_k|\psi\rangle$, and $\mathcal{L}_R(P)\propto P$ requires $H_{\mathrm{eff}}|\psi\rangle=\mu|\psi\rangle$, giving $\mathcal{L}_R(P)=2\,\mathrm{Im}(\mu)\,P$. The former fixes $\mathrm{Im}\,\mu=\mathrm{Im}\langle\psi|H_{\mathrm{eff}}|\psi\rangle=-\Lambda/2$, so the two eigenvalues are opposite and $\mathcal{L}(P)=(\Lambda-\Lambda)P=0$.
$\square$

\begin{lemma}[orthogonality of pure steady states]\label{lemma:ortho}
A genuinely dissipative (non-unitary) qubit GKSL semigroup cannot possess two distinct, non-orthogonal pure steady states. 
\end{lemma}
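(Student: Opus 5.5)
The plan is to argue by contradiction, feeding both putative steady states into Lemma~\ref{lemma:dark} and then using the anti-Hermitian identity \eqref{eq:antiherm} evaluated \emph{between} the two states. The non-orthogonality enters only at that point, and it will force every jump operator to be a multiple of the identity.

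\emph{Step 1 (common eigenvectors).} Suppose $|\psi_1\rangle,|\psi_2\rangle$ are distinct pure steady states with $\langle\psi_2|\psi_1\rangle\neq0$. By Lemma~\ref{lemma:dark} each is an eigenvector of every jump operator and of $H_{\mathrm{eff}}$:
\begin{equation*}
L_k|\psi_1\rangle=a_k|\psi_1\rangle,\quad L_k|\psi_2\rangle=b_k|\psi_2\rangle,\quad H_{\mathrm{eff}}|\psi_i\rangle=\mu_i|\psi_i\rangle .
\end{equation*}
Lemma~\ref{lemma:dark} also fixes the imaginary parts: $\mathrm{Im}\,\mu_1=-\Lambda_1/2$ with $\Lambda_1=\sum_k|a_k|^2$, and $\mathrm{Im}\,\mu_2=-\Lambda_2/2$ with $\Lambda_2=\sum_k|b_k|^2$. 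Since the two states are distinct as rays, the vectors $|\psi_1\rangle,|\psi_2\rangle$ are linearly independent and therefore span $\mathbb{C}^2$. Consequently every $L_k$ is diagonalizable in this (non-orthogonal) basis.

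\emph{Step 2 (cross matrix element of \eqref{eq:antiherm}).} I will sandwich \eqref{eq:antiherm} between $\langle\psi_2|$ and $|\psi_1\rangle$. On the left-hand side, $H_{\mathrm{eff}}$ acts on the ket and $H_{\mathrm{eff}}^\dagger$ acts on the bra, giving $(\mu_1-\overline{\mu_2})\langle\psi_2|\psi_1\rangle$. On the right-hand side, $-i\sum_k\langle L_k\psi_2|L_k\psi_1\rangle=-i\sum_k\overline{b_k}a_k\langle\psi_2|\psi_1\rangle$. Dividing by the nonzero overlap yields
\begin{equation*}
\mu_1-\overline{\mu_2}=-i\sum_k\overline{b_k}a_k .
\end{equation*}
Taking imaginary parts gives $-(\Lambda_1+\Lambda_2)/2=-\mathrm{Re}\sum_k\overline{b_k}a_k$, which rearranges to
\begin{equation*}
\sum_k|a_k-b_k|^2=\Lambda_1+\Lambda_2-2\,\mathrm{Re}\sum_k\overline{b_k}a_k=0 .
\end{equation*}
Hence $a_k=b_k$ for every $k$.

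\emph{Step 3 (conclusion).} Each $L_k$ is diagonalizable with a single eigenvalue, so $L_k=a_k\id$. Substituting into the dissipator gives $|a_k|^2\big(\rho-\tfrac12\{\id,\rho\}\big)=0$. The generator therefore reduces to $-i[H,\rho]$ (with $H$ shifted by an irrelevant constant), i.e.\ the semigroup is unitary. This contradicts the hypothesis that it is genuinely dissipative.

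The main obstacle I expect is Step 2. One must see that the \emph{off-diagonal} matrix element of \eqref{eq:antiherm} is the right object, and that the information Lemma~\ref{lemma:dark} provides on $\mathrm{Im}\,\mu_i$ turns its imaginary part into a perfect square. The diagonal elements alone only reproduce Lemma~\ref{lemma:dark} and give nothing new; the nonzero overlap is exactly what lets us divide and extract the constraint. If this identity did not close, my fallback would be to compute $\mathcal{L}$ on the coherence $|\psi_1\rangle\langle\psi_2|$ and demand consistency with stationarity of the convex hull of the two fixed points.
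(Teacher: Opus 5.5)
Your proof is correct and follows the paper's argument essentially step for step: you invoke Lemma~\ref{lemma:dark} to get the eigen-relations for $L_k$ and $H_{\mathrm{eff}}$, take the off-diagonal matrix element of \eqref{eq:antiherm} between the two steady rays, divide by the nonzero overlap, and use the imaginary part to obtain $\sum_k|\lambda_k^{(1)}-\lambda_k^{(2)}|^2=0$, which forces $L_k\propto\id$ and a purely unitary generator. The only difference is that you sandwich between $\langle\psi_2|$ and $|\psi_1\rangle$ instead of $\langle\psi_1|$ and $|\psi_2\rangle$, an immaterial relabeling.
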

Consequently, any two distinct pure steady states are orthogonal (antipodal on the Bloch sphere). Geometrically, they are two boundary points of vanishing normal flux, and the lemma forbids these tangency points from lying any closer than antipodal.

\textit{Proof.} By Lemma~\ref{lemma:dark} each steady state is such that
\begin{gather}
L_k|\psi_j\rangle=\lambda_k^{(j)}|\psi_j\rangle,\\
H_{\mathrm{eff}}|\psi_j\rangle=\mu_j|\psi_j\rangle,
\end{gather}
with $j=1,2$, $\mathrm{Im}\,\mu_j=-\tfrac12\Lambda_j$ and $\Lambda_j=\sum_k|\lambda_k^{(j)}|^2$. Sandwiching the trace-preservation identity \eqref{eq:antiherm} between the two steady rays carries this balance onto the coherence. Its off-diagonal element between $\langle\psi_1|$ and $|\psi_2\rangle$ reads
\begin{equation}
\label{eq:offrel}
(\mu_2-\bar\mu_1)\,\langle\psi_1|\psi_2\rangle
=-i\,\Lambda_{12}\,\langle\psi_1|\psi_2\rangle ,
\end{equation}
with $\Lambda_{12}:=\sum_k\overline{\lambda_k^{(1)}}\,\lambda_k^{(2)}$. Assuming $\langle\psi_1|\psi_2\rangle\neq0$, we have $\mu_2-\bar\mu_1=-i\Lambda_{12}$. Considering its imaginary part we can then write the off-diagonal balance $\Lambda_1+\Lambda_2=2\,\mathrm{Re}\,\Lambda_{12}$ which holds iff
\begin{equation}
\label{eq:key}
\sum_k\big|\lambda_k^{(1)}-\lambda_k^{(2)}\big|^2=0
\end{equation}
This means that assuming non-orthogonal pure steady states, $\lambda_k^{(1)}=\lambda_k^{(2)}=\lambda_k$ for every $k$. Two independent eigenvectors sharing every eigenvalue force $L_k=\lambda_k\id$ to be proportional to the identity. The dissipator vanishes and $\mathcal{L}(\cdot)=-i[H,\cdot]$ is purely unitary, contradicting genuine dissipation. Therefore $\langle\psi_1|\psi_2\rangle=0$. $\square$

\subsection{Proof of Proposition~\ref{prop:po}}
We use throughout the identification of $\mathrm{PO}(\Phi_t)$ with the pure steady states: the images $\Phi_t(\mathbf{S})$ form a nested family of compact convex sets, $\Phi_{t+s}(\mathbf{S})\subseteq\Phi_t(\mathbf{S})$, so a pure state that survives in every image must be fixed by the semigroup. The non-EB hypothesis is what excludes pure outputs that are not
fixed points, see the closing remark. In the language above, a PO is a boundary point of vanishing normal flux, where the flow is tangent to $\partial\mathbf{S}$.
The three cases of Proposition~\ref{prop:po} are the three ways these tangency points can populate the Bloch sphere: a pair, a single one, or none.

\paragraph{Two POs.}
By Lemma~\ref{lemma:ortho} the two POs, $P_1$ and $P_2$, are antipodal. Linearity of $\mathcal{L}$ makes the whole segment $pP_1+(1-p)P_2$ stationary, a diameter of fixed states spanning the Bloch sphere. Its midpoint is the maximally mixed state $\tfrac12(P_1+P_2)=\id/2$, so $\Phi_t$ fixes the center and is unital. The diameter of steady states is a one-parameter family, so the fixed point is not unique ($\Phi_t$ non-ergodic) and distinct inputs relax to distinct points along it ($\Phi_t$ non-mixing). $\square$

\paragraph{A single PO.}
Let $P_\psi$ be the unique PO, hence a fixed point. The orthogonal ray $\mathrm{span}\{\psi^\perp\}$ is not stationary, otherwise $P_{\psi^\perp}$ would be a second PO. Population there drains, $\langle\psi^\perp|\rho(t)|\psi^\perp\rangle\to0$ for every $\rho(0)$. 
It follows immediately that the coherence also goes to zero, $|\langle\psi|\rho(t)|\psi^\perp\rangle|^2\rightarrow 0 $
when $t\to\infty$, so $\rho(t)\to P_\psi$ for every initial state. The PO is the unique steady state (ergodic) and attracts every trajectory (mixing). $\square$

\paragraph{No PO.}
With no tangency point, the image lies strictly inside the ball, so $\Phi_t$ maps the compact convex $\mathbf{S}$ into its interior and Brouwer's theorem
gives an interior, full-rank (mixed) fixed point. It is unique: two distinct steady states would span a whole line of fixed states (by linearity of $\mathcal{L}$), and that line meets the surface of the Bloch sphere at two pure steady states, contradicting $\mathrm{PO}(\Phi_t)=\emptyset$. With a single, full-rank fixed point and the image contracting into the interior, every trajectory converges to it, so $\Phi_t$ is ergodic and mixing. $\square$

\paragraph{Remark.}
Without non-EB, pure outputs need not be steady states. The M\&P map $\Phi(\rho)=\langle0|\rho|0\rangle\,|{+}\rangle\langle{+}| +\langle1|\rho|1\rangle\,|{-}\rangle\langle{-}|$ has two orthogonal POs, $|{+}\rangle$ and $|{-}\rangle$, which are not fixed points: $\Phi(|\pm\rangle\langle\pm|)=\id/2$, and every input relaxes to $\id/2$. Readout and preparation occur in different bases, so a pure output retains no memory of its preparation. This entanglement-breaking structure is what underlies the identification of POs with steady states used throughout.
Without continuous time, Lemma~\ref{lemma:ortho} fails. Non-degenerate extremal qubit channels~\cite{Braun2014,Ruskai2002} send the Bloch ball to a displaced ellipsoid touching the sphere at two non-orthogonal pure outputs, exactly the configuration Lemma~\ref{lemma:ortho} excludes for a semigroup. Two non-orthogonal contact points are incompatible with a centered ellipsoid, so the channel is non-unital. Furthermore, it drives every state to a single mixed fixed point, hence is ergodic and mixing with two POs, against all three conclusions of statement~(a).

\section{Spectral Analysis of Ergodicity Restoration}\label{app:restoration}

The non-ergodicity of pure phase damping (PD) admits a structural reading in terms of a \emph{strong symmetry} of the Liouvillian~\cite{bucaprosen2012, albert2014}. The dephasing jump operator $L=\sigma_z$ and any commuting bare Hamiltonian (e.g., $H \propto \sigma_z$) both commute with $\sigma_z$. Consequently, $\sigma_z$ generates a strong symmetry, the observable $\langle\sigma_z\rangle$ is conserved, and the Liouvillian $\mathcal{L}$ acquires a degenerate manifold of steady states. Geometrically, this corresponds to the invariant $z$-axis of fixed points that underlies the two antipodal pure outputs $\{|0\rangle,|1\rangle\}$ described in Proposition~\ref{prop:po}(a).

Driving the array with the transverse Rydberg term $\Omega\sum_i\sigma_x^{(i)}$ explicitly breaks this strong symmetry ($[\sigma_x,\sigma_z]\neq0$), removing the conservation law and collapsing the stationary subspace to a unique steady state. We analytically demonstrate this restoration of mixing by analyzing the spectrum of the combined open-system evolution.

Expanding the density matrix in the basis $\{\id/\sqrt{2}, F_x, F_y, F_z\}$ of normalized Pauli operators, explicitly defined as $F_i = \sigma_i / \sqrt{2}$ (for $i=x,y,z$) such that $\text{Tr}(F_i F_j) = \delta_{ij}$, we write:
\begin{equation}
\rho = \frac{\id}{2} + \vec{v}\cdot\vec{F}
\end{equation}
where $\vec{v}$ is the coherence vector, a generalized analog of the Bloch vector. The Lindblad master equation $\dot{\rho} = \mathcal{L}\rho$ induces an affine transformation on $\vec{v}$:
\begin{equation}
    \dot{\vec{v}} = G\vec{v} + \vec{c}
\end{equation}
where the dynamical matrix $G$ captures the interplay between unitary rotation and dissipation. Considering a PD channel with decay rate $\gamma$ and a generic single-qubit Hamiltonian $H = a\sigma_x + b\sigma_z$, the matrix $G$ and the translation vector $\vec{c}$ take the form:
\begin{equation}
    G = \begin{pmatrix}
    -\gamma & -2b & 0 \\
    2b & -\gamma & -2a \\
    0 & 2a & 0
    \end{pmatrix}, \quad \vec{c} = 0.
\end{equation}

For purely PD ($a=b=0$) or a symmetry-preserving Hamiltonian ($a=0$), the matrix $G$ possesses a zero eigenvalue ($\lambda=0$), formally indicating the conserved subspace and resulting in non-ergodicity. However, if $a \neq 0$ (such as when the symmetry-breaking transverse drive is applied), the characteristic polynomial of $G$ yields three eigenvalues $\lambda_{1,2,3}$ that all possess strictly negative real parts. This spectral shift ensures that $\lim_{t\to\infty} \vec{v}(t) = 0$, implying an asymptotic convergence to the maximally mixed state $\rho_\text{ss} = \id/2$ regardless of the initial state. Thus, the coherent transverse driving strictly restores the mixing property.

\bibliography{bibliography}
\bibliographystyle{naturemag}

\end{document}